\theoremstyle{plain}
\newtheorem{theorem}{Theorem}
\newtheorem{lemma}[theorem]{Lemma}
\theoremstyle{definition}
\newtheorem{remark}[theorem]{Remark}
\theoremstyle{remark}
\def\bq{\begin{eqnarray}}
\def\eq{\end{eqnarray}}
\def\bqq{\begin{eqnarray*}}
\def\eqq{\end{eqnarray*}}
\def\nn{\nonumber}
\def\eps{\varepsilon}
\def\wto{\rightharpoonup}
\def\R{\mathbb{R}}
\def\cE {\mathcal{E}}
\def \d {{\rm d}}
\title{Ground state of the mass-critical inhomogeneous nonlinear Schr\"odinger functional}
\author{Thanh Viet Phan \\
\normalsize{Applied Analysis Research Group, Faculty of Mathematics and Statistics,} \\
\normalsize{Ton Duc Thang University, Ho Chi Minh City, Vietnam}\\
\normalsize{phanthanhviet@tdtu.edu.vn} 
}
\date{\normalsize\today}
\begin{document}

\maketitle


\begin{abstract} We study the ground state problem of the nonlinear Schr\"odinger functional with a mass-critical inhomogeneous nonlinear term. We provide the optimal condition for the existence of ground states and show that in the critical focusing regime there is a universal blow-up profile given by the unique optimizer of a Gagliardo-Nirenberg interpolation inequality. 

\bigskip

\noindent {\bf MSC:}  35Q40; 46N50. \\
   
\noindent {\bf Keywords:} Nonlinear Schr\"odinger equation, mass-critical, concentration-compactness method, blow-up profile, Gagliardo-Nirenberg inequality.
\end{abstract}


\section{Introduction}

We consider the mass-critical nonlinear Schr\"odinger equation
\bq\label{eq:t-NLS}
i\partial_t u(t,x)=-\Delta_x u(x) + \frac{2}{q} k(x) |u(t,x)|^{q-2} u(t,x), \quad q=2+\frac{4}{d}
\eq
with $u(t,\cdot)\in H^1(\R^d)$ and a given continuous function $k:\R^d\to \R$. 

When $k$ is a constant (the homogeneous case), \eqref{eq:t-NLS} boils down to the usual nonlinear Schr\"odinger equation studied extensively in the litt\'erature of dispersive partial differential equations (see e.g. \cite{Tao-06}). In particular, in $d=2$ dimensions it comes from the famous Gross-Pitaevskii theory describing the Bose-Einstein condensation in quantum Bose gases \cite{Gross-61,Pitaevskii-61}. 

The non-constant potential $k$ (the inhomogeneous case) corresponds to a inhomogeneous interacting effect and it arises naturally in nonlinear optics for the propagation of laser beams. Mathematically, this case is  interesting as it breaks the large group of symmetries of the homogeneous case. The study of the nonlinear Schr\"odinger equation with inhomogeneous nonlinearity was initiated by Merle  \cite{Merle-96} where he obtained a sufficient condition for the nonexistence of minimal mass blow-up solutions. On the other hand, minimal mass blow-up solutions exist if $k$ is sufficiently smooth and flat around its minima; see  Banica-Carles-Duyckaerts \cite{BanCarDuy-11} and Krieger-Schlag \cite{KriSch-09}. In $d=2$ dimensions, the full classification of minimal mass blow-up solutions in the inhomogeneous case was solved by Raphael-Szeftel \cite{RapSze-11}.

In the present paper, we are interested in the ground state solution of \eqref{eq:t-NLS}. To be precise, we will study the variational problem 
\bq \label{eq:E}
E_k= \inf \left\{ \cE_k(u): \int_{\R^d} |u|^2=1 \right\}.
\eq
associated to the nonlinear Schr\"odinger functional 
$$
\cE_k(u)= \int_{\R^d} |\nabla u(x)|^2 \d x +  \int_{\R^d} k(x) |u(x)|^{2+4/d} \d x, \quad u \in H^1(\R^d)
$$
By the standard techniques from calculus of variations, any minimizer $u_0$ of $E_k$ in \eqref{eq:E} is a solution to the stationary nonlinear Schr\"odinger equation
\bq \label{eq:u0}
-\Delta u(x) + \frac{2}{q} k(x) |u(x)|^{q-2} u(x) = \mu u(x)
\eq
with a constant $\mu\in \R$ (which is the Lagrange multiplier associated to the mass constraint $\|u\|_{L^2}=1$). Consequently, 
$$
u(t,x)=e^{-it\mu} u_0(x)
$$
is a solitary plane-wave solution to the time-dependent problem \eqref{eq:t-NLS}. 

Similarly to time-dependent problem studied in \cite{Merle-96,BanCarDuy-11,KriSch-09,RapSze-11}, a critical feature of the ground state problem \eqref{eq:E} appears when $-\inf_x k(x)$ crosses the threshold $a^*$ which is the optimal constant in the Gagliardo-Nirenberg interpolation inequality:
\bq 
\label{eq:GN} 
\left( \int_{\R^d} |\nabla u(x)|^2 \d x \right) \left( \int_{ \R^d} |u(x)|^2 \d x \right) ^{2/d} \ge a^* \int_{\R^d} |u(x)|^{2+4/d}  \d x, \quad \forall u\in H^1(\R^d).
\eq
This inequality has been well studied in \cite{GidNiNir-81,Weinstein-83,MclSer-87,Kwong-89}. It is known that \eqref{eq:GN} has a unique optimizer $Q$ up to translations and dilations. In fact, $Q$ is the unique radial positive solution to the equation  
\bq 
\label{eq:Q}
-\Delta Q + Q - Q^{1+4/d}=0.
\eq 

Our first result concerns the existence and nonexistence of minimizers of the variational problem $E_k$ in \eqref{eq:E}. 
\begin{theorem} [Existence and nonexistence of minimizers] \label{thm:existence}Assume that $k\in C(\R^d)$. 

\begin{itemize}

\item[(i)]  (Subcritical case: existence) If $\inf  k>-a^*$
and 
\bq \label{eq:k-growth-fast}
\int_{\R^d} \frac{1}{(k(x)+a^*)^{d/2}}\d x <\infty
\eq
then $E_k>0$ and it has  a minimizer. 

\item[(ii)] (Subcritical case: nonexistence) If $\inf  k>-a^*$ and 
\bq  \label{eq:k-growth-slow}
\limsup_{|x|\to \infty} \frac{k(x)}{|x|^2}<\infty,
\eq
then $E_k=0$ and it has no minimizer. 

\item[(iii)] (Critical case) If $\inf k= - a^* = k(x_0)$ for some $x_0\in \R^d$ and 
\bq \label{eq:local-degenerate}
\quad \lim_{x\to x_0} \frac{k(x)-k(x_0)}{|x-x_0|^2}=0,
\eq 
then $E_k=0$ and it has no minimizer except the case $k\equiv - a^*$. 

\item[(iv)] (Supercritical case)  If $\inf k<-a^*$, then $E_k=-\infty$. 

\end{itemize}

\end{theorem}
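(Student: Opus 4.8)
\emph{Plan of proof.} The engine for all four parts is a single pointwise reformulation via the Gagliardo--Nirenberg inequality \eqref{eq:GN}: for $\int |u|^2=1$ we have $\int|\nabla u|^2\ge a^*\int|u|^{2+4/d}$, so writing $W:=k+a^*$,
\[
\cE_k(u) = \left(\int|\nabla u|^2 - a^*\int|u|^{2+4/d}\right) + \int W\,|u|^{2+4/d} \;\ge\; \int W\,|u|^{2+4/d}.
\]
Both brackets are nonnegative exactly when $\inf k\ge -a^*$; this gives $E_k\ge 0$ for free in parts (i)--(iii), and it is the term $\int W|u|^{2+4/d}$ that one drives to $+\text{const}$, to $0$, or to $-\infty$ in the four regimes. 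The plan is to supply the correct trial functions for the upper bounds, and the correct compactness for existence.

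For part (i) I would first upgrade the inequality to coercivity: peeling off a fraction $\theta$ of the kinetic energy and applying \eqref{eq:GN} to the rest yields $\cE_k(u)\ge \theta\int|\nabla u|^2$ for a suitable $\theta\in(0,1]$, so minimizing sequences are $H^1$-bounded. Strict positivity and compactness then both flow from one Hölder estimate with conjugate exponents $\tfrac{d+2}{d},\tfrac{d+2}{2}$: for any $\Omega\subseteq\R^d$,
\[
\int_{\Omega}|u|^2 \le \left(\int_\Omega W\,|u|^{2+4/d}\right)^{\frac{d}{d+2}}\left(\int_\Omega W^{-d/2}\right)^{\frac{2}{d+2}}.
\]
Taking $\Omega=\R^d$ and $\int W|u|^{2+4/d}\le \cE_k(u)$ gives $1\le \cE_k(u)^{d/(d+2)}(\int W^{-d/2})^{2/(d+2)}$, i.e.\ $E_k\ge(\int W^{-d/2})^{-2/d}>0$. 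Taking $\Omega=\{|x|>R\}$ shows the mass of a minimizing sequence escaping to infinity is controlled by the convergent tail $\int_{|x|>R}W^{-d/2}\to 0$, uniformly in $n$; this tightness plus Rellich on balls upgrades weak to strong $L^2$-convergence, so the weak limit retains unit mass. Weak lower semicontinuity of the gradient, strong $L^{2+4/d}$-convergence (interpolating strong $L^2$ against the bounded $H^1\hookrightarrow L^{2^*}$) for the $-a^*\int|u|^{2+4/d}$ piece, and Fatou for the nonnegative $\int W|u|^{2+4/d}$ piece then give $\cE_k(\text{limit})\le E_k$, producing a minimizer.

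The crux is part (ii), where everything reduces to proving $E_k\le 0$. The obstacle is that the mass-critical scaling makes $\int|x|^2|u|^{2+4/d}$ invariant under $L^2$-preserving dilations, so any fixed bump (in particular a rescaled $Q$) leaves a \emph{positive} residual against the quadratically growing $k$, and cannot reach energy $0$. The remedy is a logarithmically spreading profile: take $\psi_R$ with $|\psi_R|^2\propto |x|^{-d}$ on the annulus $1\le|x|\le R$ (smoothly cut off) and $L^2$-normalized. A direct computation gives $\int|\nabla\psi_R|^2=O(1/\ln R)$, $\int|\psi_R|^{2+4/d}=O((\ln R)^{-(d+2)/d})$, and crucially $\int|x|^2|\psi_R|^{2+4/d}=O((\ln R)^{-2/d})$, all vanishing. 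Since \eqref{eq:k-growth-slow} yields $k(x)\le A|x|^2+B$ globally, $\int k\,|\psi_R|^{2+4/d}\to 0$ and hence $\cE_k(\psi_R)\to 0$. Nonexistence is then immediate: a minimizer $u$ would force $\int W|u|^{2+4/d}=0$ with $W>0$, whence $u\equiv 0$, contradicting $\|u\|_{L^2}=1$.

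Parts (iii) and (iv) both use concentrated Gagliardo--Nirenberg optimizers $u_\lambda(x)=\lambda^{d/2}Q(\lambda(x-x_0))/\|Q\|_{L^2}$, truncated to a fixed small ball about $x_0$ so that no assumption on $k$ at infinity is needed (the truncation error is exponentially small, contributing $o(\lambda^2)$ to the Gagliardo--Nirenberg defect). For these, $\cE_k(u_\lambda)=o(\lambda^2)+\int W|u_\lambda|^{2+4/d}$. In part (iv), choosing $x_0$ with $k(x_0)<-a^*$ and $r,\eps>0$ with $W\le-\eps$ on $B(x_0,r)$ gives $\int W|u_\lambda|^{2+4/d}\le -\eps\int|u_\lambda|^{2+4/d}$, a negative multiple of $\lambda^2$, so $\cE_k(u_\lambda)\to-\infty$ and $E_k=-\infty$. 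In part (iii) the flatness hypothesis \eqref{eq:local-degenerate} makes $\lambda^2 W(x_0+w/\lambda)\to 0$ pointwise, dominated by $\eta_0|w|^2$ on the truncation ball, so dominated convergence yields $\int W|u_\lambda|^{2+4/d}\to 0$ and $E_k=0$. Finally, any minimizer in the critical case must annihilate \emph{both} nonnegative brackets: saturating \eqref{eq:GN} forces it to be a rescaled $Q$, hence a.e.\ nonzero, after which $\int W|u|^{2+4/d}=0$ forces $W\equiv 0$, i.e.\ $k\equiv -a^*$, the stated exception.
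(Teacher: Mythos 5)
Your proposal is correct and follows essentially the same route as the paper: the Gagliardo--Nirenberg decomposition with $W=k+a^*$, the H\"older estimate against $W^{-d/2}$ for strict positivity and tightness in (i), truncated rescaled optimizers $Q$ for (iii)--(iv), and a trial function with $|u|^2\sim |x|^{-d}$ (logarithmically divergent mass) in (ii). The only difference is organizational: in (ii) the paper first dilates ($\ell\to 0$) to reduce to the scale-invariant quotient $\int |x|^2|u|^{2+4/d}/\bigl(\int|u|^2\bigr)^{1+2/d}$ and then drives that quotient to zero with the same $|x|^{-d/2}$ profile, whereas you plug the normalized spreading profile in directly; both arguments hinge on the non-integrability of $|x|^{-d}$ on $\R^d$.
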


\begin{remark} In the subcritical case $\inf k>-a^*$, it is remarkable that the growth of $k(x)$ as $|x|\to \infty$ really matters the existence of minimizers. 
Note that in the integrability condition \eqref{eq:k-growth-fast} holds if 
$$
\liminf_{|x|\to \infty} \frac{k(x)}{|x|^{2+\eps}} >0,\quad \text{for some }\eps>0.
$$
Thus the existence condition \eqref{eq:k-growth-fast} in (i) and the nonexistence condition \eqref{eq:k-growth-slow} in (ii) are  mostly the complement to each other. 
 \end{remark}

\begin{remark} In the critical case $\inf k=-a^*$, the condition \eqref{eq:local-degenerate} means that $k$ is flat enough around its minimum point $x_0$. If $k\in C^2(\R^d)$, then \eqref{eq:local-degenerate} is equivalent to the degeneracy condition
$$\nabla^2 k(x_0)=0.$$
On the other hand, the opposite condition to  \eqref{eq:local-degenerate} that  
\bq \label{eq:local-non-degenerate}
\quad \lim_{x\to x_0} \frac{k(x)-k(x_0)}{|x-x_0|^{2-\eps}}>0,  \quad \text{for some }\eps>0,
\eq
was assumed by Merle \cite{Merle-96} when he proved the nonexistence of minimal mass blow-up solutions for the time-dependent problem. In fact,  \eqref{eq:local-non-degenerate} implies the local integrability of $(k(x)+a^*)^{-d/2}$ (note that if $(k(x)+a^*)^{-d/2}$ is integrable, then by following the proof of Theorem \ref{thm:existence} (i)  we can prove that $E_k>0$; see Remark \ref{rmk:local-int}). However, \eqref{eq:local-non-degenerate} never happens if $k\in C^2$. From our analysis, the case 
\bq \label{eq:local-non-degenerate-aaa}k\in C^2, \quad \nabla^2 k(x_0) \ne 0 
\eq
is still missing, and it is indeed related to an open question in \cite[Remark after Prop. 5.4, page 76]{Merle-96}. The difficult case \eqref{eq:local-non-degenerate-aaa} has been studied by Raphael-Szeftel \cite{RapSze-11} in the context of minimal mass blow-up solutions {in $\R^2$}, but it is not clear to us how to transfer their techniques to the ground state problem in the present paper.
\end{remark} 

Next, we concentrate on the existence case (i) in Theorem \ref{thm:existence}, and analyze the blow-up behavior when $\inf k$ tends to $-a^*$. To make the analysis rigorous, we need to impose some explicit behavior of $k$ around its minima.  

{\bf Assumption for the blow-up result.} For the following blow-up theorem, we will assume that 
$$k(x)=K(x)-a$$
with $K:\R^d \to \R$ a fixed function satisfying: 
\begin{itemize}
\item [(i)] $\inf K=0$ and $K$ has finite minima $\{x_j\}_{j=1}^J$;

\item[(ii)] For any $j$, there exists $p_j>0$ such that 
$$
\lim_{x\to x_j} \frac{K(x)}{|x-x_j|^{p_j}} =\lambda_j >0. 
$$

\item[(iii)]  $K^{-d/2}$ is integrable away from $\{x_j\}_{j=1}^J$, namely for any $R>0$, 
$$
\int_{\min_j |x-x_j|>R} K(x)^{-d/2} \d x <\infty. 
$$

\end{itemize}
Let us denote $p=\max\{p_1,...,p_J\}$, $\lambda=\min\{\lambda_j: p_j=p\}$ and $Z=\{z_j: p_j=p, \lambda_j=\lambda\}$ (the set of flattest minima of $K$).

\begin{theorem}[Blow-up profile] \label{thm:blowup} We consider the variational problem \eqref{eq:E} with $k(x)=K(x)-a$, where $K$ satisfies the above conditions with $p>2$. Let $Q_0=Q/\|Q\|_{L^2}$ with $Q$ be the unique positive radial solution to \eqref{eq:Q}. Then we have
\begin{align} \label{eq:Energy-cv}
\lim_{a\uparrow a^*} \frac{E_k}{(a^*-a)^{1-2/p}} &=  \inf_{\xi>0} \left[ \xi^2 \int_{\R^d} |Q_0|^{2+4/d} +  \xi^{2-p} \lambda \int_{\R^d} |x|^p  |Q_0|^{2+4/d} \right] \\
&=\left( \frac{\lambda p}{2} \int_{\R^d} |x|^p |Q_0|^{2+4/d} \right)^{2/p} \left( \frac{p}{p-2} \int_{\R^d} |Q_0|^{2+4/d}\right)^{1-2/p}. \nn
\end{align}
Moreover, if $u_a$ is a minimizer for $E_k$, then for any sequence $a_n\uparrow a^*$, there exist a subsequence $a_{n_\ell}\uparrow a^*$ and an element $z\in Z$ (the set of flattest minima of $k$) such that up to a phase
\bq \label{eq:GS-cv}
\lim_{a_{n_\ell}\uparrow a^*} (a^*-a_{n_\ell})^{d/(2p)} u_{a_{n_\ell}} (z + (a^*-a_{n_\ell})^{1/p} x) \to {b^{d/2} Q_0(bx)} 
\eq
strongly in $H^1(\R^d)$, where $b$ is the optimizer for the right side of \eqref{eq:Energy-cv}:
$$
b= \left( \frac{(p-2)\lambda \int_{\R^d} |x|^p  |Q_0|^{2+4/d} }{2\int_{\R^d} |Q_0|^{2+4/d}} \right)^{1/p}.
$$ 
Moreover, if $Z$ has a unique element, then \eqref{eq:GS-cv} holds true for the whole family $a\uparrow a^*$. 
\end{theorem}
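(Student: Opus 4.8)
The plan is to establish matching upper and lower bounds for $E_k/(a^*-a)^{1-2/p}$ and to extract the profile by concentration-compactness on a minimizer. Throughout write $\eps=a^*-a\downarrow 0$, set $A=\int_{\R^d}|Q_0|^{2+4/d}$ and $B=\int_{\R^d}|x|^p|Q_0|^{2+4/d}$ (finite by the exponential decay of $Q$), abbreviate $G(u)=\int|\nabla u|^2$, $D(u)=\int|u|^{2+4/d}$, $I(u)=\int K|u|^{2+4/d}$, and let $m_*=\inf_{\xi>0}[\xi^2 A+\xi^{2-p}\lambda B]$ denote the right side of \eqref{eq:Energy-cv}. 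The organizing identity is $\cE_k(u)=(G(u)-a^*D(u))+\eps\,D(u)+I(u)$, whose three summands are all nonnegative (the first by \eqref{eq:GN}, the third since $K\ge0$). For the upper bound I would use trial states $u_\tau(x)=\tau^{d/2}Q_0(\tau(x-z))$ with $z\in Z$ and $\tau=\eps^{-1/p}\xi$: mass-critical scaling gives $G(u_\tau)-a^*D(u_\tau)=0$ and $\eps D(u_\tau)=\eps^{1-2/p}\xi^2A$, while $y=\tau(x-z)$ turns $I(u_\tau)$ into $\tau^2\int K(z+y/\tau)|Q_0|^{2+4/d}$; condition (ii) near $z$ plus the exponential decay of $Q_0$ to kill the tails give $\eps^{1-2/p}\xi^{2-p}\lambda B(1+o(1))$, and optimizing over $\xi$ yields $\limsup_{a\uparrow a^*}E_k/\eps^{1-2/p}\le m_*$ together with the closed form.

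Next I would set up the compactness on a minimizer $u_a$, which exists by Theorem~\ref{thm:existence}(i) since $\inf k=-a>-a^*$ and $(k+a^*)^{-d/2}=(K+\eps)^{-d/2}$ is integrable by condition (iii); by the diamagnetic inequality $u_a$ may be taken nonnegative, which accounts for the phase in \eqref{eq:GS-cv}. As $E_k\le C\eps^{1-2/p}\to0$, the identity forces $G_a-a^*D_a\to0$, $\eps D_a\to0$ and $I_a\to0$. I would first prove blow-up $G_a,D_a\to\infty$: if $D_a$ stayed bounded then $u_a$ is bounded in $H^1$, and $I_a\to0$ with $K>0$ off the finite set $\{x_j\}$ excludes any nonzero weak limit, while the growth of $K$ at infinity (forced by (iii)) excludes a bubble escaping to infinity via Lions' dichotomy, a contradiction. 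Normalizing $w_a(x)=\sigma_a^{d/2}u_a(\sigma_a x+y_a)$ with $\sigma_a=(G(Q_0)/G_a)^{1/2}\to0$, the deficit of $w_a$ equals $G(Q_0)\cdot(G_a-a^*D_a)/G_a\to0$ (numerator $\to0$, $G_a\to\infty$) at fixed $G(w_a)=G(Q_0)$ and $\|w_a\|_{L^2}=1$; by compactness of near-optimizers of \eqref{eq:GN} modulo translations and the uniqueness of $Q$, after choosing the centers $y_a$ and passing to a subsequence, $w_a\to Q_0$ strongly in $H^1$.

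For the lower bound and the identification of scale and center, rewriting $I_a=\sigma_a^{-2}\int K(\sigma_a x+y_a)|w_a|^{2+4/d}$ and using $\sigma_a^2 I_a\to0$ with $w_a\to Q_0>0$ forces $K(y_a)\to0$, hence $y_a\to x_{j_0}$ along a subsequence. Dropping the deficit, $E_k\ge\eps D_a+I_a$; here $D_a=\sigma_a^{-2}A(1+o(1))$, and using $K(x_{j_0}+\cdot)\ge(\lambda_{j_0}-o(1))|\cdot|^{p_{j_0}}$ with the translation-minimality $\inf_{c}\int|x+c|^{p_{j_0}}|Q_0|^{2+4/d}=B_{j_0}$ at $c=0$ (radial symmetry of $Q_0$) gives $I_a\ge\sigma_a^{p_{j_0}-2}\lambda_{j_0}B_{j_0}(1+o(1))$ with $B_{j_0}=\int|x|^{p_{j_0}}|Q_0|^{2+4/d}$. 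Minimizing $\eps\sigma^{-2}A+\sigma^{p_{j_0}-2}\lambda_{j_0}B_{j_0}$ over $\sigma>0$ yields $E_k\ge c_{j_0}\eps^{1-2/p_{j_0}}(1-o(1))$ with $\sigma_a\sim\eps^{1/p_{j_0}}$; comparing with $E_k\le m_*\eps^{1-2/p}$ forces first $p_{j_0}=p$ and then $\lambda_{j_0}=\lambda$, i.e. $x_{j_0}=z\in Z$, whereupon the constant equals $m_*$. This proves \eqref{eq:Energy-cv} and squeezes the deficit to lower order. Reading off $\eps^{1/p}/\sigma_a\to b=((p-2)\lambda B/(2A))^{1/p}$ and that the center locks at scale $o(\sigma_a)$, the rescaling $\eps^{d/(2p)}u_a(z+\eps^{1/p}x)=b_a^{d/2}w_a(b_a x)$ with $b_a\to b$ converges to $b^{d/2}Q_0(bx)$ in $H^1$, which is \eqref{eq:GS-cv}; when $|Z|=1$ the limit is unique, upgrading subsequential to full convergence.

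I expect the delicate point to be the sharp lower bound on $I_a$. One must simultaneously control the center shift $y_a-x_{j_0}$ relative to $\sigma_a$ (showing it is $o(\sigma_a)$, so the profile centers exactly on the minimum), justify replacing $K$ by its local model uniformly against the exponentially small but nonzero tails of $w_a$, and ensure no $L^{2+4/d}$-mass leaks toward the other minima of $K$. These are precisely the estimates that make the constant sharp rather than merely of the correct order, whereas the blow-up, the profile convergence, and the elementary one-variable optimizations are comparatively routine.
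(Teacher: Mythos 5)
Your overall strategy coincides with the paper's: sharp upper bound via rescaled $Q_0$ trial states centered at a flattest minimum, concentration-compactness for near-optimizers of \eqref{eq:GN} applied to a rescaled minimizer, localization of the concentration center using the integrability of $K^{-d/2}$, a Fatou-plus-rearrangement lower bound, and matching to pin down $p_{j_0}=p$, $\lambda_{j_0}=\lambda$, $x_0=0$ and $b$. The one structural difference is order: the paper first proves the quantitative two-sided kinetic bound $\int|\nabla u_a|^2\sim (a^*-a)^{-2/p}$ (its Step 2, which it calls the most difficult part, done via H\"older with $\int_{A_r}K^{-d/2}\le Cr^{d(1-p/2)}$ and $r=C_0(a^*-a)^{1/p}$) and then rescales by exactly $(a^*-a)^{1/p}$; you rescale by the self-determined length $\sigma_a=(G(Q_0)/G_a)^{1/2}$ and recover $\sigma_a\sim (a^*-a)^{1/p}$ a posteriori from the energy matching. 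That reordering is legitimate and slightly lightens the a priori estimates, but it does not let you avoid the underlying concentration argument, as noted next.

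Two points in your write-up are genuine gaps as stated. First, your trial function $u_\tau=\tau^{d/2}Q_0(\tau(x-z))$ carries no cutoff, and the hypotheses place no upper bound on the growth of $K$ away from its minima (condition (iii) only bounds $K$ from below, via integrability of $K^{-d/2}$). Hence $\int K(z+y/\tau)|Q_0(y)|^{2+4/d}\,\d y$ can be $+\infty$ (e.g.\ $K$ growing super-exponentially), and ``exponential decay of $Q_0$ kills the tails'' is false in general; this is exactly why the paper multiplies by a cutoff $w$ supported in $B(0,r)$ before rescaling. Second, the inference ``$\sigma_a^2 I_a\to 0$ and $w_a\to Q_0>0$ force $K(y_a)\to 0$, hence $y_a\to x_{j_0}$'' presupposes that $y_a$ stays bounded. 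Condition (iii) does not imply $K(x)\to\infty$ pointwise at infinity: $K$ may dip to arbitrarily small values on a sparse family of shrinking balls going to infinity while $K^{-d/2}$ remains integrable, so $K(y_a)\to 0$ alone does not localize $y_a$ near $\{x_j\}$, and ``Lions' dichotomy'' by itself does not use (iii). The correct tool is the H\"older estimate $\int_{A_r}|u_a|^2\le (\int K|u_a|^{2+4/d})^{d/(d+2)}(\int_{A_r}K^{-d/2})^{2/(d+2)}\to 0$ for fixed $r$, which forces essentially all the $L^2$-mass into $\cup_j B(x_j,r)$ and, combined with $w_a\to Q_0$, bounds $\mathrm{dist}(y_a,\{x_j\})$. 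Both gaps are repairable with ingredients already in your toolkit, and the rest of your plan (Fatou with the local model of $K$, strict rearrangement to lock the center, strict convexity in $\xi$ to lock the scale, uniqueness of the limit to upgrade to full convergence when $|Z|=1$) is sound and matches the paper's Steps 5--6.
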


This result is obtained by a concentration argument, inspired from the paper of Guo-Seiringer \cite{GuoSei-14} who studied the blow-up profile of the Bose-Einstein condensation in 2D with the homogeneous nonlinearity ($k=const$) and a trapping potential $V(x)=|x|^s$ with $s>0$ (see also \cite{Phan-17a} for a related result with attractive external potentials). Here our main task is to deal with the inhomogeneous nonlinearity, which makes the analysis both complicated and interesting in several places. 

In the following we will prove Theorem \ref{thm:existence} in Section \ref{sec:existence} and prove Theorem \ref{thm:blowup} in Section \ref{sec:blowup}.  

\section{Existence and nonexistence of minimizers} \label{sec:existence}

\begin{proof}[Proof of Theorem \ref{thm:existence}] (i) By the Gagliardo-Nirenberg inequality \eqref{eq:GN}, for all $u\in H^1(\R^d)$ with $\|u\|_{L^2}=1$ we have
\begin{align*}
\cE_k(u) &= \int_{\R^d} |\nabla u(x)|^2 + \int_{\R^d} k(x) |u(x)|^{2+4/d} \d x \\
&\ge \int_{\R^d} |\nabla u(x)|^2 + \min\{\inf k,0\} \int_{\R^d} |u(x)|^{2+4/d} \d x\\
&\ge \left(1+\frac{\min\{\inf k,0\}}{a^*} \right) \int_{\R^d} |\nabla u|^2 \ge 0.
\end{align*}
Since $\inf k>- a^*$, we deduce that $E_k\ge 0$. Moreover, if $\{u_n\}$ is a minimizing sequence for $E_k$, then $\{u_n\}$ is bounded in $H^1(\R^d)$. By the Banach-Alaoglu theorem, up to a subsequence, we can assume that $u_n\wto u$ weakly in $H^1(\R^d)$. 

Let us prove that $u_n\to u$ strongly in $L^2(\R^d)$. First, since $u_n\wto u$ weakly in $H^1(\R^d)$, Sobolev's embedding theorem implies the local convergence 
\bq \label{eq:ext-0}
\chi(|x|\le R) u_n \to \chi(|x|\le R) u \quad \text{strongly in $L^2(\R^d)$ for every $R>0$.}
\eq
Here $\chi_A$ is the characteristic function of the set $A\subset \R^d$. On the other hand, by the Gagliardo-Nirenberg inequality \eqref{eq:GN} again we have
\begin{align*}
\cE_k(u_n) \ge \int_{\R^d} |\nabla u_n|^2 + \int_{\R^d} k(x) |u_n|^{2+4/d} \ge  \int_{\R^d} (k(x)+a^*) |u_n(x)|^{2+4/d}. 
\end{align*}
Combining this with Holder's inequality we find that 
\begin{align*}
\int_{|x|>R} |u_n|^2 &\le  \left( \int_{|x|>R} (k(x)+a^*) |u_n(x)|^{2+4/d} \right)^{d/(d+2)}  \left( \int_{|x|>R} \frac{1}{(k(x)+a^*)^{d/2}} \right)^{2/(d+2)} \\
&\le (\cE_k(u_n))^{d/(d+2)}  \left( \int_{|x|>R} \frac{1}{(k(x)+a^*)^{d/2}} \right)^{2/(d+2)}.
\end{align*}
Since $\cE_k(u_n)$ is bounded uniformly in $n$ (as it converges to $E_k$) and $(k(x)+a^*)^{-d/2} \in L^1(\R^d)$ by Assumption \eqref{eq:k-growth-fast}, we obtain the uniform convergence
\bq \label{eq:ext-1}
\sup_n \int_{|x|>R} |u_n|^2 \le C \left( \int_{|x|>R} \frac{1}{(k(x)+a^*)^{d/2}} \right)^{2/(d+2)} \to 0
\eq 
as $R\to \infty$ by Lebesgue Dominated Convergence Theorem.  By the triangle inequality we can decompose  
\begin{align*}
\|u_n - u\|_{L^2(\R^d)} \le \| \chi(|x|\le R) (u_n-u) \|_{L^2(\R^d)}+ \|\chi(|x|>R) u_n \|_{L^2(\R^d)} + \|\chi(|x|>R) u\|_{L^2(\R^d)}
\end{align*}
Taking $n\to \infty$ and using \eqref{eq:ext-0} we get
\begin{align*}
\limsup_{n\to \infty} \|u_n - u\|_{L^2(\R^d)} &\le \limsup_{n\to \infty} \| \chi(|x|\le R) (u_n-u) \|_{L^2(\R^d)}  \\
&\quad + \limsup_{n\to \infty} \|\chi(|x|>R) u_n \|_{L^2(\R^d)} + \|\chi(|x|>R) u\|_{L^2(\R^d)} \\
&\le  \sup_{n} \|\chi(|x|>R) u_n \|_{L^2(\R^d)} + \|\chi(|x|>R) u\|_{L^2(\R^d)}
\end{align*}
for all $R>0$. Since the left side is independent of $R$, we can take $R\to \infty$ on the right side and conclude that
\begin{align*}
\limsup_{n\to \infty} \|u_n - u\|_{L^2(\R^d)} &\le \lim_{R\to \infty} \Big( \sup_{n} \|\chi(|x|>R) u_n \|_{L^2(\R^d)} + \|\chi(|x|>R) u\|_{L^2(\R^d)} \Big) =0.
\end{align*}
Here we have used   \eqref{eq:ext-1} for $\|\chi(|x|>R) u_n\|_{L^2}$ and Lebesgue Dominated Convergence Theorem for $\|\chi(|x|>R) u\|_{L^2}$. Thus $u_n\to u$ strongly in $L^2(\R^d)$ as $n\to \infty$. 

Consequently, $\|u\|_{L^2}=1$ since all $u_n$'s are normalized.  Next, to deduce that $u$ is a minimizer, it remains to prove that
$$
\liminf_{n\to \infty} \cE_k(u_n) \ge \cE_k(u).
$$

Since $u_n \wto u$ weakly in $H^1(\R^d)$ and $u_n\to u$ strongly in $L^2(\R^d)$, by interpolation we deduce that $u_n\to u$ strongly in $L^{p}(\R^d)$ for all $2\le p<2^*$, where  $2^*$ is the critical power in  Sobolev's embedding theorem, i.e. $2^*=2d/(d-2)$ if $d\ge 3$ and $2^*=\infty$ if $d\le 2$. In particular, we have $u_n\to u$ strongly in $L^{2+4/d}(\R^d)$. Also, by Sobolev's embedding theorem, up to a subsequence we can assume that $u_n(x)\to u(x)$ for a.e. $x\in \R^d$. Thus by Fatou's lemma and the fact that $k(x)+a^*\ge 0$, we have
$$
\liminf_{n\to \infty} \int_{\R^d} (k(x)+a^*) |u_n(x)|^{2+4/d} \d x \ge  \int_{\R^d}  (k(x)+a^*) |u(x)|^{2+4/d} \d x.
$$
Combining this with the strong convergence $u_n\to u$ in $L^{2+4/d}(\R^d)$, we deduce that
$$
\liminf_{n\to \infty} \int_{\R^d} k(x) |u_n(x)|^{2+4/d} \d x \ge  \int_{\R^d}  k(x) |u(x)|^{2+4/d} \d x.
$$
Finally, since $u_n \wto u$ weakly in $H^1(\R^d)$, we have by Fatou's lemma again (for the weak convergence in $L^2$)
$$
\liminf_{n\to \infty} \int_{\R^d}  |\nabla u_n(x)|^{2} \d x \ge  \int_{\R^d}   |\nabla u(x)|^{2} \d x.
$$
The latter two estimates show that
$$
\liminf_{n\to \infty} \cE_k(u_n) \ge \cE_k(u),
$$
which implies that $u$ is a minimizer for $E_k$. 

\bigskip

(ii) As in (i), since $\inf k >-a^*$ we have  
$$
\cE_k(u) \ge \left(1+\frac{\min\{\inf k,0\}}{a^*} \right) \int_{\R^d} |\nabla u|^2 >0 
$$
for all $u\in H^1(\R^d)$ with $\|u\|_{L^2}=1$. Thus $E_k\ge 0$, and if we can prove, under Assumption \eqref{eq:k-growth-slow}, that $E_k=0$, then clearly $E_k$ has no mimimizer. 

Let us prove the upper bound $E_k\le 0$ using the variational principle with a suitable trial function $u$. Under Assumption \eqref{eq:k-growth-slow}, there exists a constant $C>0$ such that  
$$k(x)\le C(|x|^2+1)\quad \text{for all }x\in \R^d.$$
Therefore, by the variational principle
$$
E_k \le \cE_k(u) \le \int_{\R^d} |\nabla u(x)|^2 \d x + C\int_{\R^d}(|x|^2+1)|u(x)|^{2+4/d} \d x
$$
for all $u\in H^1(\R^d)$ with $\|u\|_{L^2}=1$. Replacing $u$ by $u_\ell(x)=\ell^{d/2} u(\ell x)$, which satisfies the normalized condition $\|u_\ell\|_{L^2}=\|u\|_{L^2}=1$,  we obtain by changing of variables 
\begin{align*}
E_k &\le \cE_k(u_\ell) \le \int_{\R^d} |\nabla u_\ell (x)|^2 \d x + C\int_{\R^d}(|x|^2+1)|u_\ell(x)|^{2+4/d} \d x\\
&\le \ell^2 \int_{\R^d} |\nabla u(x)|^2 \d x + C\int_{\R^d}(|x|^2+\ell^{2})|u(x)|^{2+4/d} \d x
\end{align*}
for all $\ell>0$. Taking $\ell\to 0$ we deduce that
$$
E_k  \le C\int_{\R^d} |x|^2 |u(x)|^{2+4/d} \d x
$$
for all $u\in H^1(\R^d)$ with $\|u\|_{L^2}=1$. Equivalently, we have
$$
E_k  \le C \inf_{\varphi \in H^1(\R^d), \varphi \not\equiv 0} \frac{\int_{\R^d} |x|^2 |\varphi(x)|^{2+4/d} \d x}{\left( \int_{\R^d} |\varphi(x)|^2 \d x\right)^{1+2/d}}. 
$$
Choosing 
$$ \varphi(x)= \frac{\chi(|x|\le R)}{(|x|+1)^{d/2}},$$
we find that
$$|x|^2 |\varphi(x)|^{2+4/d} \le |\varphi(x)|^{2}$$
and hence
$$
E_k \le \frac{C}{\left( \int_{|x|\le R} (|x|+1)^{-d} \d x\right)^{2/d}}
$$
for all $R>0$. We conclude that
$$
E_k\le \lim_{R\to \infty } \frac{C}{\left( \int_{|x|\le R} (|x|+1)^{-d} \d x\right)^{2/d}}
 =0
$$
since $(|x|+1)^{-d}$ is not integrable in $\R^d$. Thus $E_k=0$ but it has no minimizer. 

\bigskip

(iii) Now assume that $\inf k=-a^*$. Then by the Gagliardo-Nirenberg inequality \eqref{eq:GN} we have
$$
\cE_k(u)\ge \int_{\R^d} |\nabla u|^2 - a^* \int_{\R^d} |u|^{2+4/d} \ge 0
$$
for all $u\in H^1(\R^d)$ with $\|u\|_{L^2}=1$. Thus $E_k\ge 0$. 

Next, from \eqref{eq:local-degenerate}, for any $\eps>0$ there exists $r=r_\eps>0$ such that 
$$k(x)\le -a^*+ \eps |x-x_0|^2, \quad \forall x\in B(x_0,r).$$ 
By the variational principle, for all $u\in H^1(\R^d)$ supported on $B(0,r)$ such that $\|u\|_{L^2}=1$, we have
\begin{align} \label{eq:iii-0}
E_k\le \cE_k(u(.-x_0)) &= \int_{\R^d} |\nabla u(x-x_0)|^2 \d x + \int_{\R^d} k(x)  |u(x-x_0)|^{2+4/d} \d x\nn\\
&= \int_{\R^d} |\nabla u(x)|^2 \d x + \int_{\R^d} k(x+x_0)  |u(x)|^{2+4/d} \d x \nn\\
&\le \int_{\R^d} |\nabla u(x)|^2 \d x + \int_{\R^d}  (\eps |x|^2-a^*)  |u(x)|^{2+4/d} \d x. 
\end{align}
We will use \eqref{eq:iii-0} with suitable trial functions $u$. 

Let $Q_0=Q/\|Q\|_{L^2}$ be the (normalized) optimizer of the Gagliardo-Nirenberg inequality  \eqref{eq:GN}, i.e. 
\bq \label{eq:GS-Q0}
\int_{\R^d} |\nabla Q_0|^2 - a^* \int_{\R^d} |Q_0|^{2+4/d}=0. 
\eq
{Take a smooth function $ w:\R^d \to [0,1]$ such that $w(x)=1$ for $|x|\le r/2$ and $w (x)=0$ if $|x|\ge r$}. For any $\ell>0$, denote
$$
v_\ell (x) = \ell^{d/2}Q_0(\ell x) w(x).
$$
Then $v_\ell$ is supported on $B(0,r)$ and $\|v_\ell\|_{L^2}\le \|Q_0\|_{L^2}=1$. Moreover, since both $Q_0$ and $|\nabla Q_0|$ are exponentially decay (see \cite[Proposition 4.1]{GidNiNir-81}), we have 
\begin{align*}
\int_{\R^d} |v_\ell(x)|^2 \d x&=  \int_{\R^d} | Q_0|^2 + o(1)_{\ell\to \infty} = 1 + o(1)_{\ell\to \infty} ,\\
\int_{\R^d} |\nabla v_\ell(x)|^2 \d x&= \ell^2 \int_{\R^d} |\nabla Q_0|^2 + o(1)_{\ell\to \infty},\\
\int_{\R^d} |v_\ell(x)|^{2+4/d} \d x &=\ell^2 \int_{\R^d} | Q_0|^{2+4/d}+ o(1)_{\ell\to \infty}\\
\int_{\R^d} |x|^2 |v_\ell(x)|^{2+4/d} \d x & \le \int_{\R^d} |x|^2  |Q_0(x)|^{2+4/d} \d x. 
\end{align*}
Next, we use \eqref{eq:iii-0} with $u=v_\ell/\|v_\ell\|_{L^2}$. Using $\|v_\ell\|_{L^2}\le 1$ and the above computations, together with the important identity \eqref{eq:GS-Q0},  we get
\begin{align*}
E_k &\le  \frac{1}{\|v_\ell\|_{L^2}^2}  \int_{\R^d} |\nabla v_\ell (x)|^2 \d x + \frac{1}{\|v_\ell\|_{L^2}^4} \int_{\R^d}  (\eps |x|^2-a^*)  |v_\ell(x)|^{2+4/d} \d x\\
&\le \frac{1}{\|v_\ell\|_{L^2}^4}  \left(  \int_{\R^d} |\nabla v_\ell (x)|^2 \d x + \int_{\R^d}  (\eps |x|^2-a^*)  |v_\ell(x)|^{2+4/d} \d x \right) \\
&\le (1+o(1)_{\ell\to \infty}) \left( \eps \int_{\R^d} |x|^2  |Q_0(x)|^{2+4/d} \d x + o(1)_{\ell\to \infty} \right).  
\end{align*}
Since $E_k$ is independent of $\ell$, by taking $\ell\to \infty$ we obtain
\begin{align*}
E_k \le \eps \int_{\R^d} |x|^2  |Q_0(x)|^{2+4/d} \d x.  
\end{align*}
Since it holds for arbitrary $\eps>0$, by taking $\eps\to 0$ we conclude that $E_k\le 0$. Thus $E_k=0$. 

Finally, if $E_k$ has a minimizer $u_0$, then using $\inf k \ge -a^* $ and the Gagliardo-Nirenberg inequality \eqref{eq:GN} we have
$$
0= \cE_k(u_0) \ge \int_{\R^d} |\nabla u_0|^2 - a^* \int_{\R^d} |u_0|^{2+4/d} \ge 0
$$
which implies that $u_0$ is an optimizer of \eqref{eq:GN}. This means $u_0$ is equal to $Q_0=Q/\|Q\|_{L^2}$ up to translations and dilations, and in particular $|u_0(x)|>0$ for all $x$. On the other hand, by  the Gagliardo-Nirenberg inequality \eqref{eq:GN}  again, we have
$$
0= \cE_k(u_0) \ge \int_{\R^d} (k(x)+a^*) |u_0|^{2+4/d} \ge 0.
$$
Since $k(x)+a^*\ge 0$ and $|u_0(x)|>0$ for all $x$, we conclude that $k(x)+a^*=0$ for all $x$, namely $k\equiv -a^*$.

\begin{remark} \label{rmk:local-int} In the critical case $\inf k=-a^*$, if $(k(x)+a^*)^{-d/2}$ is integrable, then by using H\"older's inequality as in the proof of (i), i.e. 
\begin{align*}
\int_{\R^d} |u|^2 &\le  \left( \int_{\R^d} (k(x)+a^*) |u(x)|^{2+4/d} \right)^{d/(d+2)}  \left( \int_{\R^d} \frac{1}{(k(x)+a^*)^{d/2}} \right)^{2/(d+2)} \\
&\le (\cE_k(u))^{d/(d+2)}  \left( \int_{\R^d} \frac{1}{(k(x)+a^*)^{d/2}} \right)^{2/(d+2)},
\end{align*}
we obtain $E_k>0$. The degeneracy condition \eqref{eq:local-degenerate} basically rules out the local integrability of $(k(x)+a^*)^{-d/2}$, and hence it is important to ensures that $E_k=0$. 
\end{remark}

\bigskip

(iv) Now assume that $\inf k<-a^*$. Since the function $k$ is continuous, there exist $\eps>0$ and a ball $B(x_0,r)\subset \R^d$ such that 
$$k(x)\le -a^*-\eps, \quad \forall x\in B(x_0,r).$$
By the variational principle, we have
\begin{align} \label{eq:iv-instability}
E_k &\le \cE_k(u(.-x_0)) = \int_{\R^d} |\nabla u(x)|^2 \d x+  \int_{\R^d} k(x+x_0)|u(x)|^{2+4/d} \d x  \nn\\
&  \le \int_{\R^d} |\nabla u|^2 - (a^*+\eps) \int_{\R^d} |u|^{2+4/d}
\end{align}
for all $u\in H^1(\R^d)$ supported on $B(0,r)$ such that $\|u\|_{L^2}=1$. 

We will use \eqref{eq:iv-instability} with suitable trial functions $u$. Let $Q_0=Q/\|Q\|_{L^2}$ be the (normalized) optimizer of the Gagliardo-Nirenberg inequality  \eqref{eq:GN}. Since  
$$
\int_{\R^d} |\nabla Q_0|^2 - a^* \int_{\R^d} |Q_0|^{2+4/d}=0
$$
and $C_c^\infty(\R^d)$ is dense in $H^1(\R^d)$, by approximating $Q_0$ we can find a function $\varphi\in C_c^\infty(\R^d)$ such that $\|\varphi\|_{L^2}=1$ and 
\bq \label{eq:varphi-a-eps}
\int_{\R^d} |\nabla \varphi|^2 - (a^*+\eps) \int_{\R^d} |\varphi|^{2+4/d}<0. 
\eq
Next, for any $\ell>0$ define 
$$
u_\ell(x)= \ell^{d/2} \varphi(\ell(x-x_0)).  
$$
Since $\varphi$ has compact support, if $\ell>0$ is sufficiently large, then $u_\ell$ is supported on $B(x_0,r)$. Thus we can use \eqref{eq:iv-instability} with the trial function $u_\ell$, which gives
\begin{align*}
E_k &\le  \int_{\R^d} |\nabla u_\ell|^2 - (a^*+\eps) \int_{\R^d} |u_\ell|^{2+4/d}\\
&= \ell^2 \left( \int_{\R^d} |\nabla \varphi|^2 - (a^*+\eps) \int_{\R^d} |\varphi|^{2+4/d} \right)
\end{align*}
for all $\ell>0$ sufficiently large. Taking $\ell\to \infty$ and using \eqref{eq:varphi-a-eps} we conclude that $E_k=-\infty$. 
\end{proof}

\section{Blow-up analysis} \label{sec:blowup}

\begin{proof}[Proof of Theorem \ref{thm:blowup}] {\bf Step 1: Energy upper bound.} This is done similarly as in the proof of Theorem \ref{thm:existence}. Without loss of generality let us assume that $x_1\in Z$. Then for any $\eps>0$, there exists $r=r_\eps>0$ such that 
$$
k(x) \le (\lambda +\eps)|x-x_1|^{p}-a, \quad \forall x\in B(x_1,r).    
$$
Then by the variational principle, for any $u\in H^1(\R^d)$, supported on $B(0,r)$ with $\|u\|_{L^2}=1$ we have
\bq \label{eq:Ek-upper-0}
E_k \le \cE_k(u(.-x_1)) \le \int_{\R^d} |\nabla u(x)|^2 + \int_{\R^d} ((\lambda +\eps)|x|^{p}-a)|u(x)|^{2+4/d}\d x.  
\eq

Now we choose a trial function $u$. Let $Q_0=Q/\|Q\|_{L^2}$ be the (normalized) optimizer of the Gagliardo-Nirenberg inequality  \eqref{eq:GN}. Take a smooth function $w:\R^d \to [0,1]$ such that $w(x)=1$ for $|x|\le r/2$ and $w (x)=0$ if $|x|\ge r$. In \eqref{eq:Ek-upper-0} we choose
$$
u= v_\ell/\|v_\ell\|_{L^2}, \quad \text{with} \quad  v_\ell(x) = \ell^{d/2}Q_0(\ell x) w(x).
$$
Since both $Q_0$ and $|\nabla Q_0|$ are exponentially decay (see \cite[Proposition 4.1]{GidNiNir-81}), we have 
\begin{align*}
\int_{\R^d} |v_\ell(x)|^2 \d x&=  \int_{\R^d} | Q_0|^2 + o(\ell^{-\infty})_{\ell\to \infty}  ,\\
\int_{\R^d} |\nabla v_\ell(x)|^2 \d x&= \ell^2 \int_{\R^d} |\nabla Q_0|^2 + o(\ell^{-\infty})_{\ell\to \infty},\\
\int_{\R^d} |v_\ell(x)|^{2+4/d} \d x &=\ell^2 \int_{\R^d} | Q_0|^{2+4/d}+ o(\ell^{-\infty})_{\ell\to \infty},\\
\int_{\R^d} |x|^p |v_\ell(x)|^{2+4/d} \d x & \le \ell^{2-p} \int_{\R^d} |x|^p  |Q_0(x)|^{2+4/d} \d x. 
\end{align*}
Here $o(\ell^{-\infty})_{\ell\to \infty}$ means an error smaller than any polynomial decay $o(\ell^{-s})$ with $s>0$. Combining with  $\|v_\ell\|_{L^2}\le 1$ and \eqref{eq:GS-Q0},  we get from \eqref{eq:Ek-upper-0}  that 
\begin{align*}
E_k &\le  \frac{1}{\|v_\ell\|_{L^2}^2}  \int_{\R^d} |\nabla v_\ell (x)|^2 \d x + \frac{1}{\|v_\ell\|_{L^2}^4} \int_{\R^d}  ((\lambda+\eps) |x|^p-a)  |v_\ell(x)|^{2+4/d} \d x\\
&\le \frac{1}{\|v_\ell\|_{L^2}^4}  \left(  \int_{\R^d} |\nabla v_\ell (x)|^2 \d x + \int_{\R^d}  ((\lambda+\eps) |x|^p-a)  |v_\ell(x)|^{2+4/d} \d x \right) \\
&\le (1+o(\ell^{-\infty})_{\ell\to \infty}) \times \\
&\times \left( \ell^2 (a^*-a) \int_{\R^d} |Q_0|^{2+4/d} + \ell^{2-p} (\lambda+\eps) \int_{\R^d} |x|^p  |Q_0|^{2+4/d} +  o(\ell^{-\infty})_{\ell\to \infty} \right).
\end{align*}
Choosing 
$$
\ell = (a^*-a)^{-1/p}  \xi
$$
with a constant $\xi>0$ independent of $a$, we obtain
$$
\limsup_{a\uparrow a^*} \frac{E_k}{(a^*-a)^{1-2/p}} \le \xi^2 \int_{\R^d} |Q_0|^{2+4/d} +  \xi^{2-p} (\lambda+\eps) \int_{\R^d} |x|^p  |Q_0|^{2+4/d}.  
$$
Since $\xi>0$ and $\eps>0$ can be chosen arbitrarily, we conclude that 
\begin{align*}
\limsup_{a\uparrow a^*} \frac{E_k}{(a^*-a)^{1-2/p}} \le \inf_{\xi>0} \left[ \xi^2 \int_{\R^d} |Q_0|^{2+4/d} +  \xi^{2-p} \lambda \int_{\R^d} |x|^p  |Q_0|^{2+4/d} \right].
\end{align*}
The right side is attained its minimum value at
$$
\xi_0= \left( \frac{(p-2)\lambda \int_{\R^d} |x|^p  |Q_0|^{2+4/d} }{2\int_{\R^d} |Q_0|^{2+4/d}} \right)^{1/p}= \left(\frac{(p-2)\lambda \int_{\R^d} |x|^p  |Q|^{2+4/d} }{2\int_{\R^d} |Q|^{2+4/d}} \right)^{1/p},
$$
and hence
\bq \label{eq:Ek-upper-sharp}
\limsup_{a\uparrow a^*} \frac{E_k}{(a^*-a)^{1-2/p}} \le  \left( \frac{\lambda p}{2} \int_{\R^d} |x|^p |Q_0|^{2+4/d} \right)^{2/p} \left( \frac{p}{p-2} \int_{\R^d} |Q_0|^{2+4/d}\right)^{1-2/p}.
\eq
\bigskip

\noindent
{\bf Step 2: Kinetic energy estimates.} Now take $u_a$ be a minimizer for $E_k$ with $k=K-a$. {When $a$ is sufficiently close $a^*$}, let us prove that
\bq \label{eq:kinetic-round-bound}
C (a^*-a)^{-2/p}  \ge \int_{\R^d} |\nabla u_a|^2 \ge a^* \int_{\R^d} |u_a|^{2+4/d} \ge C^{-1}(a^*-a)^{-2/p}
\eq
for a constant $C>0$ independent of $a$. 

Using the  Gagliardo-Nirenberg inequality \eqref{eq:GN} we have
\begin{align} \label{eq:Ek>=kin+K}
E_k&= \cE_k(u_a)= \int_{\R^d} |\nabla u_a|^2 - a \int_{\R^d} |u_a|^{2+4/d} + \int_{\R^d} K |u_a|^{2+4/d}\nn \\
&\ge \left(1-\frac{a}{a^*}\right)  \int_{\R^d} |\nabla u_a|^2 +  \int_{\R^d} K |u_a|^{2+4/d}. 
\end{align}
Since $K\ge 0$ we have
$$
E_k \ge \left(1-\frac{a}{a^*}\right)  \int_{\R^d} |\nabla u_a|^2 
$$
and the first inequality in \eqref{eq:kinetic-round-bound} follows from the energy upper bound \eqref{eq:Ek-upper-sharp}. The second inequality in \eqref{eq:kinetic-round-bound}  is exactly the  Gagliardo-Nirenberg inequality  \eqref{eq:GN}. The most difficult part is the third inequality in  \eqref{eq:kinetic-round-bound}. Inspired by Guo-Seiringer \cite{GuoSei-14} (see also \cite{Phan-17a}), we will prove that a substantial part of the mass of $u_a$ concentrates close to the minima $x_j$ of $K$. However, the perturbation method in  \cite{GuoSei-14,Phan-17a} does not work in our case and we have to develop new ideas in the proof below. 

Our key point is to use the fact that $K(x)^{-d/2}$ is integrable away from the minima $x_j$. To be precise, for $r>0$ small let us denote
$$
A_r= \{x\in \R^d: |x-x_j|\ge r \text{ for all }j=1,2,...,J\}.
$$ 
From the assumption on $K$, we know that if $0<r<R$ small, then 
$$
K(x)\le C \max_j |x-x_j|^p, \quad \forall x\in A_r \setminus A_R
$$
and $K(x)^{-d/2}$ is  integrable on $A_R$. We will take $R$ small but fixed (independent of $a$) and choose $r=r_a$ small. Consequently, 
\begin{align} \label{eq:int-K-Ar}
\int_{A_r} K(x)^{-d/2} \d x &= \int_{A_r \setminus A_R} K(x)^{-d/2} \d x + \int_{A_R} K(x)^{-d/2} \d x \nn \\
&\le C \sum_j \int_{r\le |x-x_j|\le R} |x-x_j|^{-dp/2} \d x + C_R \nn\\
&\le C r^{d(1-p/2)}
\end{align}
for a constant $C>0$ independent of $r$.

Now we estimate the mass of $u_a$ away from minima $\{x_j\}$ using Holder's inequality
\bq \label{eq:xddsydsyd}
\int_{A_r} |u_a|^2 \le \left( \int_{A_r} K|u_a|^{2+4/d} \right)^{d/(d+2)} \left( \int_{A_r} K^{-d/2} \right)^{2/(d+2)}. 
\eq
From \eqref{eq:Ek>=kin+K} and the upper bound on $E_k$ in \eqref{eq:Ek-upper-sharp} we have
$$
\int_{A_r} K|u_a|^{2+4/d} \le \int_{\R^d} K|u_a|^{2+4/d} \le E_k \le C (a^*-a)^{1-2/p}. 
$$
Combining with \eqref{eq:int-K-Ar} we obtain from \eqref{eq:xddsydsyd} that 
\begin{align*}
\int_{A_r} |u_a|^2 &\le \left( C (a^*-a)^{1-2/p}  \right)^{d/(d+2)} \left(C r^{d(1-p/2)} \right)^{2/(d+2)} \\
&= C \left[ \frac{r}{(a^*-a)^{1/p}} \right]^{\frac{d}{d+2}(2-p)}.
\end{align*}
Since $p>2$, if we choose 
\bq \label{eq:r-choice}
r=C_0(a^*-a)^{1/p}
\eq
with a big, fixed constant $C_0>0$, then we conclude that
\begin{align*}
\int_{A_r} |u_a|^2 \le \frac{1}{2}.
\end{align*}
Since $\|u_a\|_{L^2}=1$, the latter bound is equivalent to
\begin{align} \label{eq:local-ua}
\int_{\R^d \setminus A_r} |u_a|^2 \ge \frac{1}{2}.
\end{align}

Finally, using H\"older's inequality again (with the above choice \eqref{eq:r-choice} of $r$) we have
\begin{align*}
\frac{1}{2}\le \int_{\R^d\setminus A_r} |u_a|^2 &\le \left( \int_{\R^d\setminus A_r} |u_a|^{2+4/d} \right)^{d/(d+2)} \left( \int_{\R^d\setminus A_r} 1 \right)^{2/(d+2)} \\
&\le \left( \int_{\R^d} |u_a|^{2+4/d} \right)^{d/(d+2)} (C r^{d})^{2/(d+2)},
\end{align*}
which implies the third inequality in \eqref{eq:kinetic-round-bound}
$$
\int_{\R^d} |u_a|^{2+4/d}  \ge C^{-1} (a^*-a)^{-2/p}. 
$$

\bigskip

\noindent
{\bf Step 3: Convergence of minimizers by compactness argument.} We recall the following well-known compactness results for the variational problem
\bq \label{eq:GP-0}
0= \inf\left\{ \int_{\R^d} |\nabla u|^2 - a^* \int_{\R^d} |u|^{2+4/d}: u\in H^1(\R^d), \|u\|_{L^2}=1 \right\}. 
\eq

\begin{lemma} \label{lem:compactness-GN}Let $\{\varphi_n\}$ be a minimizing sequence for  the variational problem \eqref{eq:GP-0} such that 
$$
C^{-1} \le \int_{\R^d} |\nabla \varphi_n|^2 \le C
$$ 
for a constant $C>0$ independent of $n$. Then up to a subsequence when $n\to \infty$, there exist {$\theta\in \R$}, $b>0$ and $\{z_n\} \subset \R^d$ such that 
$$
\varphi_n (x-z_n) \to e^{i\theta} b^{d/2} Q_0(bx)
$$
strongly in $H^1(\R^d)$. 
\end{lemma}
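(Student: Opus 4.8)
The plan is to run a concentration-compactness argument, exploiting the mass-critical scaling of \eqref{eq:GP-0} to rule out any loss or splitting of mass. First I would record the basic bounds. Since $\|\varphi_n\|_{L^2}=1$ and $\int_{\R^d}|\nabla\varphi_n|^2\le C$, the sequence is bounded in $H^1(\R^d)$; and since $\{\varphi_n\}$ is minimizing, $\int_{\R^d}|\nabla\varphi_n|^2 - a^*\int_{\R^d}|\varphi_n|^{2+4/d}\to 0$, so together with the lower bound $\int_{\R^d}|\nabla\varphi_n|^2\ge C^{-1}$ we obtain that $\int_{\R^d}|\varphi_n|^{2+4/d}$ stays bounded away from both $0$ and $\infty$. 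In particular $\varphi_n$ does not vanish in $L^{2+4/d}(\R^d)$.

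Next I would exclude vanishing. By Lions' lemma, if $\sup_{y}\int_{B(y,1)}|\varphi_n|^2\to 0$ then $\varphi_n\to 0$ strongly in $L^{2+4/d}(\R^d)$, contradicting the lower bound just obtained. Hence there exist $\delta>0$ and $\{z_n\}\subset\R^d$ with $\int_{B(z_n,1)}|\varphi_n|^2\ge\delta$. Setting $\tilde\varphi_n(x)=\varphi_n(x+z_n)$, the translated sequence is still bounded in $H^1(\R^d)$ and still minimizing, and up to a subsequence $\tilde\varphi_n\wto\varphi$ weakly in $H^1(\R^d)$; the compact local embedding on $B(0,1)$ forces $\int_{B(0,1)}|\varphi|^2\ge\delta$, so $\varphi\neq 0$.

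The crux is to show that $\varphi$ carries the full mass. Writing $\tilde\varphi_n=\varphi+r_n$ with $r_n\wto 0$, the Brezis-Lieb lemma gives $\int_{\R^d}|\tilde\varphi_n|^2=\int_{\R^d}|\varphi|^2+\int_{\R^d}|r_n|^2+o(1)$, and likewise for $\int_{\R^d}|\nabla\cdot|^2$ and $\int_{\R^d}|\cdot|^{2+4/d}$. Let $m=\int_{\R^d}|\varphi|^2\in(0,1]$, so $\int_{\R^d}|r_n|^2\to 1-m\in[0,1)$. Applying \eqref{eq:GN} to each piece yields $\int_{\R^d}|\nabla\varphi|^2-a^*\int_{\R^d}|\varphi|^{2+4/d}\ge(1-m^{2/d})\int_{\R^d}|\nabla\varphi|^2\ge 0$, and an eventually nonnegative deficit for $r_n$ (since $\int_{\R^d}|r_n|^2\to 1-m<1$). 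Because the minimizing property makes the sum of the two deficits tend to $0$, each deficit must tend to $0$; in particular $\int_{\R^d}|\nabla\varphi|^2-a^*\int_{\R^d}|\varphi|^{2+4/d}=0$, which combined with the inequality $\ge(1-m^{2/d})\int_{\R^d}|\nabla\varphi|^2$ and $\int_{\R^d}|\nabla\varphi|^2>0$ forces $m=1$.

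With $m=1$ we get $\int_{\R^d}|r_n|^2\to 0$, and then \eqref{eq:GN} applied to $r_n$ (whose gradient stays bounded) gives $\int_{\R^d}|r_n|^{2+4/d}\to 0$ and hence $\int_{\R^d}|\nabla r_n|^2\to 0$, so $\tilde\varphi_n\to\varphi$ strongly in $H^1(\R^d)$. The limit $\varphi$ is an $L^2$-normalized optimizer of \eqref{eq:GN}, so by the uniqueness of optimizers it equals $e^{i\theta}b^{d/2}Q_0(b(\cdot-y_0))$ for some $\theta\in\R$, $b>0$, $y_0\in\R^d$; replacing $z_n$ by $z_n+y_0$ centers the profile and yields the claim. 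The main obstacle is the third paragraph, ruling out mass splitting: rather than invoking strict subadditivity (which fails here since the infimum equals $0$), the argument uses the mass-critical scaling so that each Brezis-Lieb piece has subunit mass and therefore a nonnegative Gagliardo-Nirenberg deficit.
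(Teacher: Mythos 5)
Your proof is correct and follows exactly the standard concentration-compactness route that the paper itself invokes for this lemma (it gives no details, only citing Lions and \cite[Appendix A]{Phan-18}): rule out vanishing via Lions' lemma using the nondegeneracy of the $L^{2+4/d}$ norm, translate, and kill dichotomy by the Brezis--Lieb decomposition combined with the observation that each piece has mass at most $1$ and hence a nonnegative Gagliardo--Nirenberg deficit. The key steps --- the identity $\int|\nabla\varphi|^2-a^*\int|\varphi|^{2+4/d}\ge(1-m^{2/d})\int|\nabla\varphi|^2$ forcing $m=1$, and the resulting strong $H^1$ convergence to a normalized optimizer identified via the uniqueness of $Q$ --- are all sound.
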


This lemma follows from the standard concentration-compactness method \cite{Lions-84,Lions-84b} (see e.g. \cite[Appendix A]{Phan-18} for a detailed  explanation).

To apply Lemma \ref{lem:compactness-GN}, we need to rescale $u_a$ to ensures that its kinetic energy is of order 1. 
Denote
$$
v_a(x)= (a^*-a)^{d/(2p)} u_a((a^*-a)^{1/p} x), 
$$
i.e.
$$u_a(x) =   (a^*-a)^{-d/(2p)} v_a((a^*-a)^{-1/p} x).$$
Using $K\ge 0$ we obtain
\begin{align*}
E_k &\ge \int_{\R^d} |\nabla u_a|^2 - a \int_{\R^d} |u_a|^{2+4/d}\\
&=(a^*-a)^{-2/p} \left[  \int_{\R^d} |\nabla v_a|^2 - a  \int_{\R^d} |v_a|^{2+4/d} \right]. 
\end{align*}
Combining with the upper bound on $E_k$ in \eqref{eq:Ek-upper-sharp}, we find that 
$$
 \int_{\R^d} |\nabla v_a|^2 - a  \int_{\R^d} |v_a|^{2+4/d} \le C (a^*-a) \to 0.
$$
Thus $\{v_a\}$ is a minimizing sequence for  the variational problem \eqref{eq:GP-0} as $a\uparrow a^*$. Moreover, from the kinetic estimate \eqref{eq:kinetic-round-bound} from Step 2, we find that 
$$
C^{-1} \le \int_{\R^d} |\nabla v_a|^2 \le C
$$
for a constant $C>0$ independent of $a$. Thus by Lemma  \ref{lem:compactness-GN}, up to a subsequence (i.e. $a_n\uparrow a^*$, but we will write $a\uparrow a^*$ for simplicity) and up to a phase, there exist a constant $b>0$ and a sequence $\{z_a\} \subset  \R^d$ such that
\bq \label{eq:CV-va}
v_a(x-z_a)\to b^{d/2} Q_0(bx)
\eq
strongly in $H^1(\R^d)$ as $a\uparrow a^*$. 

\bigskip

\noindent
{\bf Step 4: Determination of $\{z_a\}$.} Now let us give more information on the sequence $\{z_a\}$ in \eqref{eq:CV-va}. Recall  \eqref{eq:local-ua} which implies that 
$$
\sum_{j} \int_{|x-x_j|\le r} |u_a|^2 \ge \frac{1}{2}. 
$$
Recalling the choice \eqref{eq:r-choice} of $r$ and the definition of $v_a$, we obtain, by the change of variable {$y=(a^*-a)^{-1/p} x +z_a$},
$$
\sum_{j} \int_{|y-z_a - (a^*-a)^{-1/p}x_j|\le C_0} {|v_a(y-z_a)|^2} \ge \frac{1}{2}. 
$$
From the strong convergence \eqref{eq:CV-va}, we deduce that
$$
\sum_{j} \int_{|y -z_a - (a^*-a)^{-1/p}x_j|\le C_0} b^{d}|Q_0(by)|^2 \ge \frac{1}{2}. 
$$
Thus we can find some $j_0\in \{1,2,...,J\}$ such that
$$
\int_{|y -z_a - (a^*-a)^{-1/p}x_{j_0}|\le C_0} b^{d}|Q_0(by)|^2 \ge \frac{1}{2J}>0
$$
along a subsequence $a\uparrow a^*$. Since $Q_0$ exponentially decays, the latter bound implies that 
{$z_a + (a^*-a)^{-1/p}x_{j_0}$} is bounded. Thus up to a subsequence again, we can find $x_0\in \R^d$ such that  
{$$
z_a+(a^*-a)^{-1/p}x_{j_0} \to x_0.
$$}
Since the translation action is continuous in $L^2(\R^d)$, we can eventually replace $z_a$ by {$x_0-(a^*-a)^{-1/p}x_{j_0} $} in \eqref{eq:CV-va} and obtain
\bq \label{eq:CV-vab}
{v_a(x+(a^*-a)^{-1/p}x_{j_0}-x_0)}\to b^{d/2} Q_0(bx)
\eq
strongly in $H^1(\R^d)$ as $a\uparrow a^*$. 

In the following we will prove that $x_{j_0}\in Z$ (the set of flattest minima),  $x_0=0$ and determine $b$ exactly. All this requires an exact asymptotic analysis of the energy $E_k$. The sharp upper bound has been given in Step 1, and now we focus on the matching lower bound.  

\bigskip

\noindent 
{\bf Step 5: Energy lower bound.} Using the Gagliardo-Nirenberg inequality \eqref{eq:GN} as in  \eqref{eq:Ek>=kin+K} and putting back the definition of $v_a$, we have
\begin{align} \label{eq:Ek-low-s-0}
E_k =\cE_k(u_a) &\ge (a^*-a)\int_{\R^d} |u_a|^{2+4/d} + \int_{\R^d} K |u_a|^{2+4/d} \nn \\
&=(a^*-a)^{1-2/p} \int_{\R^d} |v_a|^{2+4/d} \nn\\
&\quad +  (a^*-a)^{-2/p} \int_{\R^d} K(x_{j_0}+(a^*-a)^{1/p}x) \times \\
&\quad\quad\quad\quad\quad\quad\quad\quad\quad\quad\times  {|v_a(x+(a^*-a)^{-1/p}x_{j_0})|}^{2+4/d} \d x \nn.
\end{align}

The first term on the right side of \eqref{eq:Ek-low-s-0} can be estimated exactly using \eqref{eq:CV-vab} and Sobolev's embedding theorem,  i.e.
$$
\int_{\R^d} |v_a|^{2+4/d} \to  \int_{\R^d} |b^{d/2} Q_0(bx)|^{2+4/d} \d x = b^2 \int_{\R^d} |Q_0|^{2+4/d}. 
$$

To deal with the second term on the right side of \eqref{eq:Ek-low-s-0}, let us use the local information of  $K$ around its minima $x_{j_0}$:
$$
\lim_{x\to x_{j_0}} {\frac{K(x)}{|x-x_{j_0}|^{p_{j_0}}}} = \lambda_{j_0}>0,
$$
or putting differently, 
$$
\lim_{a\uparrow a^*} \frac{K(x_{j_0}+(a^*-a)^{1/p}x)}{((a^*-a)^{1/p}|x|)^{p_{j_0}} } = \lambda_{j_0}, \text{for all } x\in \R^d.
$$
Moreover, the convergence \eqref{eq:CV-vab} implies that, up to a subsequence, 
$$
v_a({x+(a^*-a)^{-1/p}x_{j_0}}) \to b^{d/2}Q_0(b(x+x_0))\quad \text{for a.e. } x\in \R^d.
$$
Thus we have the pointwise convergence
\begin{align*}
\lim_{a\uparrow a^*} & (a^*-a)^{-p_{j_0}/p} K(x_{j_0}+(a^*-a)^{1/p}x) |v_a({x+(a^*-a)^{-1/p}x_{j_0}})|^{2+4/d}\\
&= \lambda_{j_0} |x|^{p_{j_0}} \Big(b^{d/2}Q_0(b(x+x_0)) \Big)^{2+4/d}. 
\end{align*}
Therefore, we can estimate the second term on the right side of \eqref{eq:Ek-low-s-0} using Fatou's lemma 
\begin{align*}
\liminf_{a\uparrow a^*} &  (a^*-a)^{-p_{j_0}/p} \int_{\R^d} K(x_{j_0}+(a^*-a)^{1/p}x) |v_a({x+(a^*-a)^{-1/p}x_{j_0}})|^{2+4/d}\\
&\ge \int_{\R^d} \lambda_{j_0} |x|^{p_{j_0}} \Big(b^{d/2}Q_0(b(x+x_0)) \Big)^{2+4/d} \\
&=  \lambda_{j_0} b^{2-p_{j_0}} \int_{\R^d} |x|^{p_{j_0}} |Q_0(x+bx_0)|^{2+4/d} \d x. 
\end{align*}

In summary, we deduce from \eqref{eq:Ek-low-s-0} that
\begin{align*} 
E_k &\ge (a^*-a)^{1-2/p} \left[ b^2 \int_{\R^d}  |Q_0|^{2+4/d} + o(1)_{a\to a^*}\right] \nn\\
& + (a^*-a)^{p_{j_0}/p-2/p}  \left[ \lambda_{j_0} b^{2-p_{j_0}} \int_{\R^d} |x|^{p_{j_0}} |Q_0(x+bx_0)|^{2+4/d} \d x + o(1)_{a\to a^*}\right]. 
\end{align*}

Here we know by a-priori that $p_{j_0}\le p$. However, if $p_{j_0}<p$, then 
$$ (a^*-a)^{p_{j_0}/p-2/p} \gg (a^*-a)^{1-2/p}$$
in the limit $a\uparrow a^*$, leading to a contradiction to the upper bound \eqref{eq:Ek-upper-sharp} in Step 1. Thus we must have
$$
p_{j_0}=p
$$
and hence
\begin{align}  \label{eq:Ek-low-sh-1}
\liminf_{a\uparrow a^*} \frac{E_k}{(a^*-a)^{1-2/p}} &\ge  b^2 \int_{\R^d}  |Q_0|^{2+4/d} \nn\\
&\quad  + \lambda_{j_0} b^{2-p} \int_{\R^d} |x|^{p} |Q_0(x+bx_0)|^{2+4/d} \d x .
\end{align} 

\bigskip

\noindent 
{\bf Step 6: Conclusion.} Combining the upper bound \eqref{eq:Ek-upper-sharp} and the lower bound \eqref{eq:Ek-low-sh-1} we find that
\begin{align} \label{eq:matching}
& b^2 \int_{\R^d}  |Q_0|^{2+4/d}+ \lambda_{j_0} b^{2-p} \int_{\R^d} |x|^{p} |Q_0(x+bx_0)|^{2+4/d} \d x \nn\\
&\le \inf_{\xi>0} \left[ \xi^2 \int_{\R^d} |Q_0|^{2+4/d} +  \lambda \xi^{2-p}  \int_{\R^d} |x|^p  |Q_0(x)|^{2+4/d} \d x \right].
\end{align}
Since $p_{j_0}=p$, by the definition of $\lambda$ we have 
$$\lambda_{j_0} \ge \lambda$$
where the equality happens if and only if $x_{j_0}\in Z$ (the set of flattest minima). Moreover, since $Q_0$ is radially symmetric decreasing and $|x|^p$ is radially symmetric (strictly) increasing, by the rearrangement inequality {(see \cite[Theorem 3.4 and the associated remark]{LiLo-01})} we have
$$
\int_{\R^d} |x|^p  |Q_0(x+bx_0)|^{2+4/d} \le \int_{\R^d} |x|^p  |Q_0(x)|^{2+4/d} \d x
$$
with the equality happens if and only if $x_0=0$. From the matching identity \eqref{eq:matching} we conclude that
$$
x_{j_0}\in Z, \quad x_0=0
$$
and $b$ is exactly the optimal value on the right side of \eqref{eq:matching}
$$
b=\xi_0= \left(\frac{(p-2)\lambda \int_{\R^d} |x|^p  |Q|^{2+4/d} }{2\int_{\R^d} |Q|^{2+4/d}} \right)^{1/p}. 
$$
Thus we have proved the desired energy convergence from \eqref{eq:Ek-upper-sharp}-\eqref{eq:Ek-low-sh-1} 
$$
\liminf_{a\uparrow a^*} \frac{E_k}{(a^*-a)^{1-2/p}} = \inf_{\xi>0} \left[ \xi^2 \int_{\R^d} |Q_0|^{2+4/d} +  \lambda \xi^{2-p}  \int_{\R^d} |x|^p  |Q_0(x)|^{2+4/d} \d x \right]
$$
and the ground state convergence from \eqref{eq:CV-vab}
$$
v_a({x+ (a^*-a)^{-1/p}x_{j_0}}) \to b^{d/2} Q_0 (b x).
$$
strongly in $H^1(\R^d)$, which is equivalent to \eqref{eq:GS-cv}:
$$
(a^*-a)^{d/(2p)} u_a(x_{j_0}+ (a^*-a)^{1/p}x) \to b^{d/2} Q_0 (b x).
$$

So far, we have to prove these convergences up to a subsequence $a_n \uparrow a^*$. However, since the limit in the energy convergence is unique, the energy convergence holds for the whole family $a\uparrow a^*$. Moreover, if $Z$ has a unique element, then the limit of the ground state convergence is also unique, and the ground state convergence holds for the whole family $a\uparrow a^*$. This ends the proof. 
\end{proof}

\end{document}